\tikzstyle{normalNode}=[draw=black, fill=black, circle, minimum size=1.35em, inner sep=0.5mm]
\tikzstyle{labeledNode}=[normalNode, fill=white, minimum size=1.35em, inner sep=0.5mm]
\tikzstyle{normalEdge}=[black, thick, >=stealth]
\tikzstyle{secondaryEdge}=[normalEdge, dashed]
\newtheorem{theorem}{Theorem}
\newtheorem{lemma}[theorem]{Lemma}
\newtheorem{corollary}[theorem]{Corollary}
\def\mathrlap{\mathpalette\mathrlapinternal} 
\def\mathrlapinternal#1#2{\rlap{$\mathsurround=0pt#1{#2}$}}
\newcommand{\elsum}[1]{\sum_{\mathrlap{#1}}\;} 
\newcommand\st{\;:\;}
\newcommand{\RFprimal}{\text{[P]}}
\newcommand{\RFdual}{\text{[D]}}
\newcommand{\rflow}{\textsc{Maximum Robust Flow}}
\newcommand{\irflow}{\textsc{Integral Maximum Robust Flow}}
\newcommand{\robustval}{\operatorname{val_r}}
\newcommand{\flowval}{\operatorname{val}}
\newcommand{\supp}[1]{\operatorname{supp}(#1)}
\newcommand{\NP}{N\!P}
\title{The Complexity of Computing a Robust Flow}
\author{Yann Disser\thanks{Department of Mathematics, Graduate School CE, TU Darmstadt} \and Jannik Matuschke\thanks{Department of Mathematics and School of Management, Technische Universit\"at M\"unchen}} 
\date{}
\begin{document}
\maketitle

\begin{abstract}
Robust network flows are a concept for dealing with uncertainty and unforeseen failures in the network infrastructure. 
  One of the most basic models is the {\rflow} problem: Given a network and an integer $k$, the task is to find a path flow of maximum robust value, i.e., the guaranteed value of surviving flow after removal of any $k$ arcs in the network. The complexity of this problem appeared to have been settled a decade ago: Aneja et al.~\cite{AnejaChandrasekaranNair2001} showed that the problem can be solved efficiently when $k = 1$, while an article by Du and Chandrasekaran~\cite{DuChandrasekaran2007} established that the problem is $\NP$-hard for any constant value of~$k$~larger~than~$1$.
  
  We point to a flaw in the proof of the latter result, leaving the complexity for constant $k$ open once again.
  For the case that $k$ is not bounded by a constant, we present a new hardness proof, establishing $\NP$-hardness even for instances where the number of paths is polynomial in the size of the network.
  We further show that computing optimal integral solutions is already $\NP$-hard for $k = 2$ (whereas for $k=1$, an efficient algorithm is known) and give a positive result for the case that capacities are in $\{1, 2\}$. 
\end{abstract}

\section{Introduction}
Network flows are an important tool for modeling vital network services, such as transportation, communication, or energy transmission.
In many of these applications, the flow is subjected to uncertainties such as failures of links in the network infrastructure. 
This motivates the study of \emph {robust optimization} versions of network flows, which offer a concept to anticipate and counteract such failures. 
A fundamental optimization problem within this framework is to find a flow that maximizes the amount of surviving flow after it is affected by a worst-case failure of $k$ links in the network for some fixed number $k$. This problem is also known as the {\rflow} problem. 

In this paper, we discuss the complexity of {\rflow}. We point out an error in an earlier result on this problem, which claimed $\NP$-hardness for the case that $k$ is fixed to a constant value larger than~$1$. In its stead, we give a new hardness proof which, however, requires the number $k$ to be a non-constant part of the input. 
Before we discuss these results in detail, we give a formal definition of the problem and discuss related literature.

\paragraph{Problem definition}
We are given a directed graph $G = (V, E)$ with source $s$, sink $t$, capacities $u \in \mathbb{Z}_{+}^E$, and an integer $k$, specifying the number of possible link failures.
Let $\mathcal{P}$ denote the set of $s$-$t$-paths in $G$ and let $\mathcal{S} := \{S \subseteq E \st |S| = k\}$.
  An \emph{$s$-$t$-flow} is a vector $x \in \mathbb{R}_+^\mathcal{P}$ respecting the capacity constraints $\sum_{P : e \in P} x(P) \leq u(e)$ for all $e \in E$. 
  The goal is to find an $s$-$t$-flow $x$ that maximizes the \emph{robust flow value}  
  $$\robustval(x) := \sum_{P \in \mathcal{P}} x(P) - \max_{S \in \mathcal{S}} \  \elsum{\ \ P \in \mathcal{P} : P \cap S \neq \emptyset} x(P),$$
 i.e., the amount of remaining flow after failure of any set of $k$ arcs.
 
 \paragraph{Related work}
 Aneja, Chandrasekaran, and Nair~\cite{AnejaChandrasekaranNair2001} were the first to investigate {\rflow}. They showed that if $k = 1$, the problem can be solved in polynomial time by solving a parametric linear program. In fact, their LP yields a flow $x$ that simultaneously maximizes $\robustval(x)$ and the nominal flow value $\flowval(x) := \sum_{P \in \mathcal{P}} x(P)$.
They also show that a maximum \emph{integral} robust flow can be found in polynomial time for $k=1$, even though its value might be strictly lower than that of the optimal fractional solution.
 Following up on this work, Du and Chandrasekaran~\cite{DuChandrasekaran2007} investigated the problem for values of $k$ larger than~$1$. They presented a hardness proof for {\rflow} with $k=2$. Unfortunately, however, this proof is incorrect. We explain this error in detail in \cref{sec:old-hardness}.

Because of the presumed hardness of the problem, later work focused on approximation algorithms.
Bertsimas, Nasrabadi, and Stiller~\cite{BertsimasNasrabadiStiller2013} use a variation of the parametric LP to obtain an approximation algorithm for {\rflow} whose factor depends on the fraction of flow lost through the failure. More recently, Bertsimas, Nasrabadi, and Orlin~\cite{bertsimas2013power} gave an alternative analysis of the same algorithm, establishing an approximation factor of $1 + (k/2)^2/(k+1)$.
Another related concept are \emph{$k$-route flows} introduced by Aggarwal and Orlin~\cite{aggarwal2002multiroute}. A $k$-route flow is a conic combination of elementary flows, each sending flow uniformly along $k$ disjoint paths. This structure ensures that the failure of any arc can only destroy a $1/k$ fraction of the total flow. Baffier et al.~\cite{baffier2016parametric} observed that computing a maximum $(k+1)$-route flow yields a $(k+1)$-approximation for {\rflow}.

Several alternative robustness models for flows have been proposed in different application contexts. 
Taking a less conservative approach, Bertsimas, Nasrabadi, and Stiller~\cite{BertsimasNasrabadiStiller2013} and Matuschke, McCormick, and Oriolo~\cite{matuschke2017rerouting} proposed different models of flows that can be rerouted after failures occur.
Matuschke et al.~\cite{matuschke2016protecting} investigated variants of robust flows in which an adversary can target individual flow paths and the network can be fortified against such attacks.
Gottschalk et al.~\cite{gottschalk2016robust} devised a robust variant of flows over time in which transit times are uncertain.

Robust flows can be seen as a dual version of \emph{network flow interdiction}, where the task is to find a subset of $k$ arcs whose removal minimizes the maximum flow value in the remaining network. Wood~\cite{wood1993deterministic} proved that this problem is strongly $\NP$-hard. The reduction presented in \cref{sec:new-hardness} also exploits the fact that interdiction is $\NP$-hard, but the construction is considerably more involved in order to couple network flow and interdiction decisions in the correct way. For an overview of results on network flow interdiction, see the recent article by Chestnut and Zenklusen~\cite{chestnut2017hardness} on the approximability of the problem.

\paragraph{Results and structure of this paper}
In \cref{sec:background}, we give the background necessary to understand Du and Chandrasekaran's reduction~\cite{DuChandrasekaran2007} and the reason why it does not imply hardness for {\rflow} with $k=2$.

In \cref{sec:new-hardness}, we then give a new reduction that establishes $\NP$-hardness for {\rflow} when $k$ is an arbitrarily large number given in the input.
Our reduction even works when  the number of paths in the graph is polynomial in the size of the network and only two different capacity values occur (capacity $1$ and a capacity that is large but polynomial in the size of the network). We also point out that the problem becomes easy for the case that all capacities are equal.

In \cref{sec:integral}, we show that it is $\NP$-hard to compute an optimal \emph{integral} solution for $k=2$. Note that this is in contrast to the case $k=1$, where the optimal integral solution can be computed efficiently~\cite{AnejaChandrasekaranNair2001}. While  $\NP$-hardness for the integral case even holds when capacities are bounded by $3$, we show that the problem can be solved efficiently when capacities are bounded by $2$, even for arbitrary values of~$k$.

 \section{Background}\label{sec:background}

 The hardness result in~\cite{DuChandrasekaran2007} is based on an LP formulation of {\rflow} and the equivalence of optimization and separation, which we shortly recapitulate in this section.
 
 \subsection{LP formulation}
 For our further discussion of {\rflow}, the following linear programming formulation of the problem will be useful:
 
 \begin{alignat*}{3}
	 \RFprimal\quad \max && \elsum{P \in \mathcal{P}} x(P) & - \lambda\\
	 \text{s.t.} && \elsum{P : e\in P} x(P) & \, \leq \, u(e) && \qquad \forall\; e \in E\\
   && \elsum{P : P \cap S \neq \emptyset} x(P) - \lambda & \, \leq \, 0 && \qquad \forall\; S \in \mathcal{S}\\
   && x(P) & \, \geq \, 0 && \qquad \forall\; P \in \mathcal{P}
 \end{alignat*}
 
 Note that $\lambda = \max_{S \in \mathcal{S}} \sum_{P \in \mathcal{P} : P \cap S \neq \emptyset} x(P)$ in any optimal solution to $\RFprimal$,~i.e.,~$\lambda$ represents the amount of flow lost in a worst-case failure scenario for flow~$x$. We also consider the dual of $\RFprimal$:
  \begin{alignat*}{3}
	 \RFdual\quad \min && \sum_{e \in E} u(e) y(e) \\
	 \text{s.t.} && \ \ \sum_{e\in P} y(e) \ + \ \elsum{S : P \cap S \neq \emptyset} z(S) & \,\geq\, 1 && \qquad \forall\; P \in \mathcal{P}\\
   && \sum_{S \in \mathcal{S}} z(S) & \,=\, 1\\
   && y(e) & \, \geq \, 0 && \qquad \forall\; e \in E\\
   && z(S) & \, \geq \, 0 && \qquad \forall\; S \in \mathcal{S}
 \end{alignat*}
 
   Note that the number of $s$-$t$-paths in $G$ and hence the number of variables of $\RFprimal$ can be exponential in $|E|$. On the other hand, the number of variables of $\RFdual$ is $|E| + \binom{|E|}{k}$, which is polynomial in $|E|$ for constant values of $k$. In such a situation, a standard approach is to solve the dual via its \emph{separation problem}, which is described in the next section.
   
 \subsection{Equivalence of Optimization and Separation}
 Let $Q \subseteq \mathbb{R}^n$ be a rational polyhedron. 
 By a classic result of Gr\"otschel, Lovasz, and Schrijver~\cite{groetschel1988geometric}, optimizing arbitrary linear objectives over $Q$ is polynomially equivalent to finding out wether a given point is in $Q$ and finding a hyperplane separating the point from $Q$ if not. We give a formal statement of this result below.

\medskip
\noindent
$\textsc{Separation}(Q)$\\[-18pt]
\begin{description}
\item[Input:] a vector $y \in \mathbb{R}^n$\\[-18pt]
\item[Task:] Assert that $y \in Q$, or find a separating hyperplane, i.e., a vector $d \in \mathbb{R}^n$ such that $d^{T}x < d^{T}y$ for all $x \in Q$.
\end{description} 
 
\noindent
 $\textsc{Optimization(Q)}$\\[-18pt]
 \begin{description}
\item[Input:] a vector $c \in \mathbb{R}^n$\\[-18pt]
\item[Task:] Either assert that $Q = \emptyset$, or find $x, d \in \mathbb{R}^n$ such that $c^Td > 0$ and $x + \alpha d \in Q$ for all $\alpha \geq 0$, or find $x \in Q$ maximizing $c^Tx$.
 \end{description}
 
 \begin{theorem}[{\cite[Theorem~6.4.9]{groetschel1988geometric}}]
   The optimization problem for $Q$ can be solved in oracle-polynomial time given an oracle for the separation problem for~$Q$, and vice versa.
 \end{theorem}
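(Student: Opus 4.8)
This is the classical equivalence underlying the ellipsoid method, so the plan is to prove the two directions separately, doing the bulk of the work in ``separation $\Rightarrow$ optimization'' and then obtaining the converse from it through a polarity argument.

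\textbf{From separation to optimization.} Given the objective vector $c$, I would run the ellipsoid method in $\mathbb{R}^{n}$, starting from a ball $B$ centered at the origin whose radius is chosen, using the a priori bound on the encoding length of the vertices and facets of $Q$, large enough that $B$ contains a $c$-maximizing vertex whenever one exists. At the center $x_{i}$ of the current ellipsoid I call the separation oracle on $x_{i}$: if it returns a separating vector, I replace the current ellipsoid by the smallest one containing the half on which $x_{i}$ is excluded; if instead $x_{i} \in Q$, I record $x_{i}$ and cut with the objective halfspace $\{x : c^{T}x \ge c^{T}x_{i}\}$. Since the ellipsoid volume decreases by a constant factor each step, after a polynomial number of iterations the best recorded feasible center is within tolerance $2^{-\mathrm{poly}}$ of the optimum; the usual analysis also tells us how to read off emptiness of $Q$ from the run, and how to detect unboundedness of $c^{T}x$ over $Q$ and extract a recession direction $d$ with $c^{T}d > 0$ in that case.

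\textbf{The technical core.} Making this rigorous requires the standard care of the Gr\"otschel--Lov\'asz--Schrijver machinery, and this is where the genuine difficulty lies. First, $Q$ need not be full-dimensional: this is handled either by the shallow-cut variant of the ellipsoid method or by perturbing $Q$ slightly and then, once enough affinely independent points of $Q$ have been found, descending into its affine hull. Second, the ellipsoids must be re-rounded after each iteration so that the numbers describing them remain of polynomially bounded encoding length. Third, the point produced by the ellipsoid method is only approximately optimal, so one needs a concluding rounding step---via continued fractions and simultaneous Diophantine approximation, leaning again on the encoding-length bound---to snap it to an exact optimal vertex (and to recover an exact description of the optimal face). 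Making all of this go through is precisely the reason the hypothesis is stated for a \emph{well-described} polyhedron, i.e.\ one given together with such a bound.

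\textbf{From optimization to separation.} Translate so that the origin lies in the relative interior of $Q$ and pass to the polar $Q^{\circ} = \{c : c^{T}x \le 1 \text{ for all } x \in Q\}$, which is again a well-described polyhedron with $(Q^{\circ})^{\circ} = Q$. An optimization oracle for $Q$ yields a separation oracle for $Q^{\circ}$ essentially for free: to decide whether a query point $c$ lies in $Q^{\circ}$, maximize $c^{T}x$ over $Q$; if the optimum is at most $1$, then $c \in Q^{\circ}$, and otherwise, taking $d$ to be a maximizer (or, in the unbounded case, a far-out point along the returned ray), we get $d^{T}c' \le 1 < d^{T}c$ for every $c' \in Q^{\circ}$, so $d$ separates $c$ from $Q^{\circ}$. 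Feeding this separation oracle into the first direction produces an optimization oracle for $Q^{\circ}$, and running the same polar construction once more turns that into a separation oracle for $(Q^{\circ})^{\circ} = Q$, as desired; the only remaining bookkeeping is to check that the encoding-length bound carries over from $Q$ to $Q^{\circ}$, which is routine.
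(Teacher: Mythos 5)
This theorem is imported from Gr\"otschel, Lov\'asz, and Schrijver and is not proved in the paper, so there is no in-paper argument to compare against; your sketch is the standard proof from the cited source (ellipsoid method with the well-described-polyhedron bounds and Diophantine rounding for one direction, polarity plus $(Q^{\circ})^{\circ}=Q$ for the converse) and is accurate as an outline. Just be aware that it is an outline: the ``technical core'' paragraph names rather than carries out the genuinely hard steps, which is appropriate here only because the result is being cited, not reproved.
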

 
 \subsection{Dual Separation for {\rflow}}\label{sec:old-hardness}
 
 Let $Q$ be the set of feasible solutions of the dual program $\RFdual$, i.e., $$Q := \left\{(y, z) \in \mathbb{R}^{E \times \mathcal{S}} \st \sum_{S \in \mathcal{S}} z(S) = 1,\ \sum_{e \in P} y(e) + \elsum{S : P \cap S \neq \emptyset} z(S) \geq 1 \ \forall P \in \mathcal{P}\right\}.$$
 
 In the separation problem for $Q$, we are given $(y, z) \in \mathbb{R}^{E \times \mathcal{S}}$ and have to decide whether $(y, z) \in Q$. Since checking whether $\sum_{S \in \mathcal{S}} z(S) = 1$ can be done in polynomial time for constant values of $k$, the separation problem is polynomial-time equivalent to finding a path $P$ such that $\sum_{e \in P} y(e) + \sum_{S : P \cap S \neq \emptyset} z(S) < 1$ or deciding that no such path is exists.
 
Du and Chandrasekaran~\cite{DuChandrasekaran2007} showed that $\textsc{Separation}(Q)$ is $\NP$-hard, even when $k = 2$. They concluded that by the equivalence of optimization and separation, solving $\RFdual$ and hence solving $\RFprimal$ is NP-hard. However, this claim is not correct. It is true that the hardness of $\textsc{Separation}(Q)$ implies that also $\textsc{Optimization}(Q)$ is $\NP$-hard. However, $\RFdual$ is only a special case of $\textsc{Optimization}(Q)$: The objective function of $\RFdual$ is not an arbitrary vector in $\mathbb{R}^{E \times \mathcal{S}}$, but it is restricted to those objective functions where all coefficients corresponding to the $z$-variables are $0$. Indeed, it turns out that the instances of {\rflow} with $k = 2$ constructed in the reducton of~\cite{DuChandrasekaran2007} contain an $s$-$t$-cut of cardinality $2$---implying every $s$-$t$-flow has robust value $0$ in these instances.

\section{Robust flows with large number of failing arcs}\label{sec:new-hardness}

\begin{theorem}\label{thm:hardness-arbitrary-k}
  {\rflow} is strongly NP-hard, even when restricted to instances where the number of paths is polynomial in the size of the graph.
\end{theorem}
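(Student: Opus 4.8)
The plan is to reduce from \textsc{Maximum-Flow Network Interdiction}, which is strongly $\NP$-hard by Wood~\cite{wood1993deterministic}: given a digraph $G$ with source $s$, sink $t$, integer capacities $u$, a budget $k$, and a target $r$, decide whether some $k$ arcs can be deleted so that the maximum $s$-$t$-flow in the remaining graph is at most $r$. As the problem is strongly $\NP$-hard we may assume all numbers are polynomially bounded, and by replacing each arc of capacity $c$ with $c$ parallel unit arcs we may assume $G$ has unit capacities and still only polynomially many $s$-$t$-paths.

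From such an instance I would construct a {\rflow} instance $(G', s', t', u', k')$ and a threshold $\rho$ so that $G'$ admits a flow of robust value at least $\rho$ if and only if no deletion of $k$ arcs brings the maximum flow of $G$ down to $r$ or below. The network $G'$ consists of a copy of $G$, through which the flow must pass, together with \emph{anchoring} gadgets built from a small number of wide (large-but-polynomial-capacity) arcs, plus auxiliary arcs; the anchoring forces every near-optimal flow into a \emph{canonical}, maximally spread-out pattern inside the copy of $G$, and the auxiliary arcs fix the failure budget $k'$ (essentially $k$) and the threshold. The design goal is that against the canonical flow the adversary's optimal $k'$ deletions form an optimal interdiction set of $G$, so that the optimal robust value encodes $\min_{|S|=k}\operatorname{maxflow}(G-S)$ precisely enough to recover the interdiction answer, while the anchoring leaves the flow player no way to exploit the fact that it reacts to the adversary only through its choice of committed path decomposition. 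The whole construction is laid out as a layered network of parallel ``tracks'' built from unit arcs and a few wide arcs, so that $G'$ has only polynomially many $s'$-$t'$-paths and only the capacities $1$ and one large value occur.

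Correctness then has the two standard directions. For completeness --- every $k$ deletions leave $\operatorname{maxflow}(G)\ge r+1$ --- I would exhibit the canonical flow $x$ and bound the damage of an arbitrary set of $k'$ deletions by splitting it into a part inside the copy of $G$, which behaves like an interdiction set and (using unit capacities and the ``no''-instance hypothesis) leaves enough canonical flow alive, and a part in the auxiliary gadgets, which cannot be destroyed efficiently; the two bounds combine to $\robustval(x)\ge\rho$. For soundness, if some $k$-arc set $R$ reduces $\operatorname{maxflow}(G)$ to $\le r$, I would show that against \emph{every} flow $x$ of $G'$ the adversary deletes $R$ (completed to $k'$ arcs) and, using that the restriction of $x$ to the copy of $G$ is a flow in $G-R$ and therefore crosses the min-cut there, destroys enough that the surviving value drops below $\rho$.

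The main obstacle is the coupling between the flow and the interdiction. In {\rflow} the flow player commits before the adversary moves, so in general $\max_x\robustval(x)$ can be strictly below $\min_{|S|=k}\operatorname{maxflow}(G-S)$: a committed path decomposition may pile flow on an arc the adversary then exploits, or hedge in a way that no interdiction set mimics. The crux of the proof is to \emph{force}, through the anchoring gadgets, a canonical, crisply routed, maximally spread flow --- ruling out fractional or ``thin'' hedging and ruling out identity-mixing inside the gadgets --- and to calibrate the wide capacities, the budget $k'$, and the threshold $\rho$ so precisely that the adversary's optimal response is exactly an interdiction set of $G$, leaving no slack in the damage bookkeeping in either direction. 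A secondary difficulty, pulling against this, is that the redundancy needed to make the flow robust must not create super-polynomially many $s'$-$t'$-paths, which is why the gadgets are built from a few wide arcs in a near-series layout rather than from exponentially many alternative routes.
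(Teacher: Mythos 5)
There is a genuine gap, and it sits exactly where you locate the ``main obstacle.'' Your completeness direction needs a \emph{single} committed path decomposition $x$ that retains value at least $\rho$ against \emph{every} $k$-arc deletion, and you propose to get it by forcing a canonical, maximally spread flow inside a copy of the interdiction instance $G$. But the hypothesis of a ``no''-instance only tells you that $\operatorname{maxflow}(G-S)\ge r+1$ for every $S$; it does not say that any one fixed flow of $G$ survives with value $\ge r+1$ after deleting $S$, because the surviving part of a committed decomposition cannot reroute and need not be anywhere near a maximum flow of $G-S$. This is precisely the max-min versus min-max gap, and no gadget attached \emph{outside} the copy of $G$ can close it: whatever canonical flow your anchors enforce inside $G$, the adversary attacks that flow's own heavy arcs, and the damage is governed by the flow's worst case, not by $\min_S\operatorname{maxflow}(G-S)$. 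Since the entire technical content of the theorem is the construction that resolves this coupling, and your proposal leaves the anchoring gadgets unspecified beyond their intended effect, the proof is not there yet. Two of your preprocessing claims are also concretely false: (i) replacing a capacity-$c$ arc by $c$ parallel unit arcs does not preserve interdiction (deleting the original arc costs one unit of budget and removes $c$ capacity; in the modified graph the same damage costs $c$ budget), and in fact cardinality interdiction on unit-capacity graphs is polynomial (the optimum is $\operatorname{mincut}(G)-k$, attained by deleting $k$ arcs of a minimum cut), so after your normalization no hardness remains to transfer; (ii) the split into parallel unit arcs multiplies the number of $s$-$t$-paths by $\prod_e c_e$ along each original path, which is exponential even for capacities bounded by $2$, and in any case a copy of an arbitrary digraph $G$ already has exponentially many paths, so the ``polynomially many paths'' refinement cannot come from embedding $G$.

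For comparison, the paper sidesteps the coupling problem rather than solving it head-on: it reduces from \textsc{Clique}, builds a network in which the flow is forced up to trivial augmentations (all wide arcs may be assumed saturated), and confines the flow player's only genuine decision to a four-node gadget $H$ with a handful of arcs. The adversary's budget is calibrated so that its optimal deletion encodes a $k'$-vertex set $U$ of the \textsc{Clique} instance, and the leftover budget --- which depends on the number of edges inside $U$ --- determines how many of the gadget arcs it can additionally hit. Whether a clique exists then flips which routing in $H$ is optimal for the flow player, read off from the flow on a single arc $(v',v'')$. If you want to salvage your plan, the lesson is that the reduction should make the flow essentially unique outside a constant-size decision gadget, so that there is no canonical-flow-versus-adaptive-maxflow discrepancy to control in the first place.
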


\begin{proof}
We show this by a reduction from \textsc{Clique}: Given a graph $G' = (V', E')$ and $k' \in \mathbb{Z}_+$, is there a clique of size $k'$ in $G'$?
We will construct an instance of {\rflow} consisting of a graph $G = (V, E)$, source $s$, sink $t$, capacities $u\colon E \rightarrow \mathbb{Q}_{+} \cup \{\infty\}$, and $k \in \mathbb{Z}_+$ from the \textsc{Clique} instance (at the end of the proof, we show how to obtain an equivalent instance with finite and integral capacities).
Let $$\ell := |V'| + 2|E'|,\quad k := k' \ell + (|V'| - k') + 2|E'|,\quad \varepsilon := \frac{1}{\ell},\quad M := (1+\varepsilon)k.$$
For every vertex $v \in V'$ we introduce a node $a_v$ and two additional groups of $\ell$ nodes each, $A_v = \{a_{v, 1}, \dots, a_{v, \ell}\}$ and $B_v = \{b_{v, 1}, \dots, b_{v, \ell}\}$. We connect $a_v$ to every node in $B_v$ by an arc of capacity $M$, and we also connect each node $a_{v, i}$ to $b_{v, i}$ by an arc of capacity $1$.
For every edge $e = \{u, v\} \in E'$ we introduce two nodes $a'_e, a''_e$ and arcs $(a'_e, b_{u, i}), (a''_e, b_{u, i}), (a'_e, b_{v, i}), (a''_e, b_{v, i})$ for $i \in \{1, \dots, \ell\}$, each of capacity $M$.
We denote 
$$A := \bigcup_{v \in V'} (\{a_v\} \cup A_v) \cup \bigcup_{e \in E'} \{a'_e, a''_e\} \text{ and } B := \bigcup_{v \in V'} B_v.$$ 
We also introduce a source $s$ and a sink $t$ and arcs $(s, a)$ for every $a \in A$ and $(b, t)$ for every $b \in B$, all of infinite capacity.
We then add $k$ parallel $s$-$t$-arcs $e_1, \dots, e_{k}$. Defining $h := 2 \cdot \binom{k'}{2} - 2$, we set the capacity of $e_1, \dots, e_h$ to $1 + \varepsilon$ and the capacity of $e_{h+1}, \dots, e_{k}$ to $1$. 
We finally add two additional nodes $v', v''$, together with two $s$-$v'$-arcs $e'_1$, $e'_2$, two $v''$-$t$-arcs $e''_1$, $e''_2$, and arcs $(s, v''), (v', t), (v', v'')$. We set the capacities $u(e'_1) = u(e''_1) = 1$, $u(e'_2) = u(e''_2) = u(v', v'') = \varepsilon$ and $u(s, v'') = u(v', t) = 1 + \varepsilon$. We let $E_H$ denote the arcs in the subgraph $H$ induced by the node set $\{s, v', v'', t\}$.

\begin{figure}[t]
  \centering
	\begin{tikzpicture}[font=\scriptsize]
		\node[labeledNode] (s) {$s$};
		\path (s)
			++(0, 2) node[labeledNode] (v) {$a_v$}
				edge[normalEdge, bend left=20, <-] (s)
			++(0.8, 0) node[labeledNode] (v1) {$a_{v,1}$}
			  edge[normalEdge, bend left=20, <-] (s)
			++(0.8, 0) node (vi) {$\dots$}			
			++(0.8, 0) node[labeledNode] (vl) {$a_{v,\ell}$}
				edge[normalEdge, bend left=20, <-] (s)
			++(1.2, 0) node[labeledNode] (e1) {$a'_e$}
				edge[normalEdge, bend left=20, <-] (s)
			++(0.8, 0) node[labeledNode] (e2) {$a''_e$}
				edge[normalEdge, bend left=20, <-] (s)
			++(1.2, 0) node[labeledNode] (w1) {$a_{w,1}$}
			  edge[normalEdge, bend left=20, <-] (s)
			++(0.8, 0) node (wi) {$\dots$}			
			++(0.8, 0) node[labeledNode] (wl) {$a_{w,\ell}$}
				edge[normalEdge, bend left=20, <-] (s)
			++(0.8, 0) node[labeledNode] (w) {$a_w$}
				edge[normalEdge, bend left=30, <-] (s)
			++(-7.2, 2) node[labeledNode] (v1b) {$b_{v,1}$}
				edge[normalEdge, <-] (v)
				edge[normalEdge, <-, dashed] (v1)
				edge[normalEdge, <-] (e1)
				edge[normalEdge, <-] (e2)
			++(0.8, 0) node (vib) {$\dots$}
			++(0.8, 0) node[labeledNode] (vlb) {$b_{v,\ell}$}
				edge[normalEdge, <-] (v)
				edge[normalEdge, <-, dashed] (vl)
				edge[normalEdge, <-] (e1)
				edge[normalEdge, <-] (e2)
			++(3.2, 0) node[labeledNode] (w1b) {$b_{w,1}$}
				edge[normalEdge, <-] (w)
				edge[normalEdge, <-, dashed] (w1)
				edge[normalEdge, <-] (e1)
				edge[normalEdge, <-] (e2)
			++(0.8, 0) node (wib) {$\dots$}
			++(0.8, 0) node[labeledNode] (wlb) {$b_{w,\ell}$}
				edge[normalEdge, <-] (w)
				edge[normalEdge, <-, dashed] (wl)
				edge[normalEdge, <-] (e1)
				edge[normalEdge, <-] (e2)
			++(-7.2, 2) node[labeledNode] (t) {$t$}
				edge[normalEdge, bend left=20, <-] (v1b)
				edge[normalEdge, bend left=20, <-] (vlb)
				edge[normalEdge, bend left=20, <-] (w1b)
				edge[normalEdge, bend left=20, <-] (wlb)
				edge[normalEdge, bend right=20, <-, dashed] node[right] {$e_k$} node[left] {$\dots$} (s)
				edge[normalEdge, bend right=50, <-, dashed] node[left] {$e_1$} (s);
			
			\path (s) 
			  ++(-2.75, 3) node[labeledNode] (vv) {$v'$}
			    edge[normalEdge, double, <-, bend right=30, dashed] node[above, sloped] {$e'_1$} node[below, sloped] {$e'_2$} (s)
			    edge[normalEdge, ->, bend left=30, dashed] (t)
			  ++(-1.5, 0) node[labeledNode] (vvv) {$v''$}
			    edge[normalEdge, <-, bend right=45, dashed] (s)
			    edge[normalEdge, <-, dashed] (vv)
			    edge[normalEdge, double, ->, bend left=45, dashed] node[above, sloped] {$e''_1$} node[below, sloped] {$e''_2$} (t);
			    
	\end{tikzpicture}%
	\vspace{-0.5cm}
	\caption{Construction of the reduction from \textsc{Clique} for two vertices $v, w$ and an edge $e = \{v, w\}$. Solid arcs have a ``large'' capacity (i.e., $u(e) \in \{M, \infty\}$), dashed arcs have a ``small'' capacity (i.e., $u(e) \in  \{\varepsilon, 1, 1 + \varepsilon\}$).}
\end{figure}
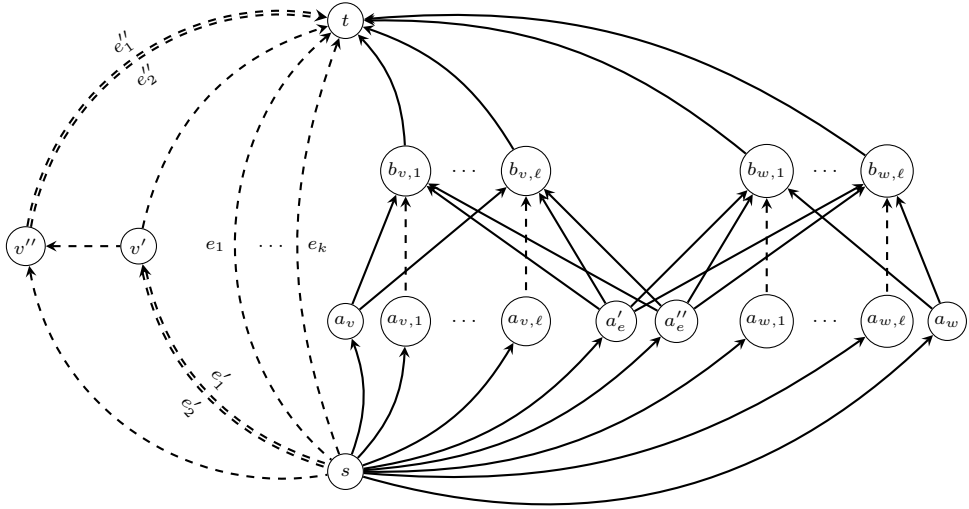

We now prove the following lemma, which implies Theorem~\ref{thm:hardness-arbitrary-k}. For convenience we will use the notation $x(e) := \sum_{P : e \in P} x(P)$ for the total flow through an arc $e$.

\begin{lemma}\label{lem:flow-value-clique}
  Let $(x^*, \lambda^*)$ be an optimal solution to {\rflow}. Then there is a clique of size $k'$ in $G'$ if and only if $x^*(v', v'') > 0$.
\end{lemma}

\begin{proof}
In order to prove Lemma~\ref{lem:flow-value-clique} we first observe that, without loss of generality, we can assume that all arcs in $E \cap (A \times B)$ and the arcs $e_1, \dots, e_{k}$ are saturated by $x^*$: If any of these arcs is not saturated, we can increase the flow along the unique path containing that arc and increase $\lambda^*$ by the same value, not decreasing the value of the solution and not changing the flow on $(v', v'')$. 

Consider the set $F := \{e_{1}, \dots, e_{k}\} \cup E_H$ and define $$f_{x^*}(r) := \max \left\{\sum_{e \in F'} x^*(e) \, :\, F' \subseteq F,\, |F'| \leq r\right\}$$
for $r \in \mathbb{N}$. We derive the following lemma.

\begin{lemma}\label{lem:best-response-value}
  Let $h^* := \max \{|E'[U]| \,:\, U \subseteq V', \; |U| \leq k' \}$. Then,
  $$\lambda^* = (|V'| + 4|E'|)\ell M + k'\ell + f_{x^*}(2h^*).$$
\end{lemma}

\begin{proof}
  Let $S \in \mathcal{S}$ be such that $\sum_{P \in \mathcal{P} : S \cap P \neq \emptyset} x^*(P) = \lambda^*$.
  Without loss of generality, we can assume that $S \cap (A \times B) = \emptyset$: If $S$ contains an arc $(a, b) \in A \times B$, we can replace it by either of the arcs $(s, a)$ or $(b, t)$, each of which intersect the unique $s$-$t$-path containing $(a, b)$. 
  Now define $$U := \{v \in V' \,:\, (b, t) \in S \text{ for all } b \in B_v \}.$$ Note that $|U| \leq \lfloor k/\ell \rfloor \leq k'$ by choice of $k$ and $\ell$. 
  Furthermore, note that $x^*(P) = M$ for exactly $(|V'| + 4|E'|)\ell$ paths $P \in \mathcal{P}$ by our earlier assumption that arcs in $E \cap (A \times B)$ are fully saturated.
Also, by choice of $M$ and since every other path carries at most $1+\varepsilon$ units of flow, the only possibility to destroy at least $(|V'| + 4|E'|)\ell M$ units of flow is for $S$ to intersect all these paths, and by maximality of $\lambda^*$, this must indeed be the case.
Therefore, we can assume that for every $v \in V'$, either $v \in U$ or $\{(s, a_v)\} \cup \{(s, a'_e), (s, a''_e) \,:\, e \in \delta(v)\} \subseteq S$. This implies that $U$ already determines a subset $S_U$ of $$k_U := \ell |U| + |V'| - |U| + 2(|E'| - |E'[U]|)$$ arcs in $S$, destroying a flow of $(|V'| + 4|E'|)\ell M + |U| \ell$ units. The remaining $k - k_U$
  arcs in $S$ can destroy an additional flow of at most $f_{x^*}(k - k_U)$, as no arc in $E \setminus F$ carries more than $1$ unit of flow after destruction of the flow paths of value $M$ and there are at least $k$ arcs in $F$ with flow value at least $1$.
  Furthermore observe that $f_{x^*}(r' + r'') \leq f_{x^*}(r') + (1+\varepsilon)r''$ as none of the arcs in $F$ carries more than $1 + \varepsilon$ units of flow.
  We deduce that
  \begin{align*}
     \lambda^* & \leq \; (|V'| + 4|E'|)\ell M + |U| \ell + f_{x^*}(k - k_U)\\
     & = \; (|V'| + 4|E'|)\ell M + |U| \ell + f_{x^*}((k' - |U|)(\ell - 1) + 2|E'[U]|)\\
     & \leq \; (|V'| + 4|E'|)\ell M + |U|\ell + f_{x^*}(2|E'[U]|) + (1 + \varepsilon)(k' - |U|)(\ell - 1)\\
    & = \; (|V'| + 4|E'|)\ell M + k'\ell + (k' - |U|)(\underbrace{\varepsilon(\ell - 1) - 1}_{\leq 0}) + f_{x^*}(2\underbrace{|E'[U]|}_{\leq h^*})\\
    & \leq (|V'| + 4|E'|)\ell M + k'\ell + f_{x^*}(2h^*).
  \end{align*}
   Now let $U^* \subseteq V'$ be such that $|U^*| = k'$ and $|E'[U^*]| = h^*$ and let $F^* \subseteq F$ be such that $|F^*| = 2h^*$ and $\sum_{e \in F^*} x^*(e) = f_{x^*}(2h^*)$ (recall that we may assume arcs in $F$ to be saturated). Consider the set
   $$S^* := \bigcup_{v \in U^*} \!\! B_v \, \cup \, \{a_v \,:\, v \in V' \setminus U^*\} \, \cup \, \{a'_e, a''_e \,:\, e \notin E'[U^*]\} \, \cup \, F^*$$
  and observe that $\sum_{P : P \cap S^* \neq \emptyset} x^*(P) = (|V'| + 4|E'|)\ell M + k'\ell + f_{x^*}(2h^*)$. This proves Lemma~\ref{lem:best-response-value}.
\end{proof}

 We use Lemma~\ref{lem:best-response-value} to prove Lemma~\ref{lem:flow-value-clique} as follows. Observe that $(x^*, \lambda^*)$ maximizes the quantity $\sum_{P \in \mathcal{P}} x^*(P) - \lambda^*$. As we already fixed the flow value on all paths outside of the subgraph $H$, we know that 
 \begin{align*}
 \sum_{P \in \mathcal{P}} x^*(P) & = \underbrace{(|V'| + 4|E'|)\ell M + \ell|V'| + \sum_{i = 1}^{k} u(e_i)}_{C_1 :=} + \sum_{P \in \mathcal{P} : P \subseteq E_H} x^*(P)\\
 & = C_1 + x^*(v', t) + x^*(v', v'') + x^*(s, v''),
 \end{align*}
 where the last three summands together determine the total nominal flow through $H$. Defining $C_2 := (|V'| + 4|E'|)\ell M + k'\ell$, Lemma~\ref{lem:best-response-value} states that $\lambda^* = C_2 + f_{x^*}(2h^*)$. As $C_1$ and $C_2$ do not depend on the flow in $E_H$, we deduce that the flow $x^*$ in $E_H$ maximizes the quantity $$x^*(v', t) + x^*(v', v'') + x^*(s, v'') - f_{x^*}(2h^*).$$ 
 
 First, assume $G'$ has no clique of size $k'$, i.e., $h^* \leq \binom{k'}{2} - 1$. In this case, $2h^* \leq h$ and hence $f_{x^*}(2h^*) = 2h^*(1+\varepsilon)$, independent of the flow values in the subgraph $H$, as no arc in $E_H$ can carry more than $1 + \varepsilon$ units of flow and there are already $h$ arcs with flow value $1 + \varepsilon$ in $F \setminus E_H$. Therefore, $x^*$ maximizes $x^*(v', t) + x^*(v', v'') + x^*(s, v'')$, which implies it is the unique maximum flow in $H$ fulfilling $\sum_{P : (v', v'') \in P} x^*(P) = 0$.
 
 Now assume $G'$ has a clique of size $k'$ and thus $h^* = \binom{k'}{2}$. In this case $2h^* = h + 2$ and hence $$f_{x^*}(2h^*) = 2h \cdot (1+\varepsilon) + \max\{1,\, x^*(v', t)\} + \max \{1,\, x^*(s, v'')\},$$ 
as $(v', t)$ and $(s, v'')$ are the only two arcs in $F$ outside $\{e_1, \dots, e_h\}$ that can carry more than $1$ unit of flow. Thus $x^*$ maximizes
 $$x^*(v', t) + x^*(v', v'') + x^*(s, v'') - \max\{1,\, x^*(v', t)\} - \max \{1,\, x^*(s, v'')\}.$$ This term is maximized for $x^*(v', t) = x^*(s, v'') = 1$ and $x^*(v', v'') = \varepsilon$.
 
 The above two observations conclude the proof of Lemma~\ref{lem:flow-value-clique}.\end{proof}
 
 Note that the size of the graph $G = (V, E)$ constructed in the reduction is polynomial in the size of $G'$. Furthermore observe that $|\mathcal{P}| \leq |E|$ and that all capacities are polynomial in the size of $G$ (note that the capacity $\infty$ can be replaced by $|E|M$, and multiplying all capacities with $\ell$ yields integral capacities).
  This concludes the proof of Theorem~\ref{thm:hardness-arbitrary-k}.
\end{proof}

\paragraph{Reduction to two capacity values} We now observe that there is a pseudopolynomial transformation that converts general instances of {\rflow} to instances with where the capacity of each arc is one of two different values: $1$ or $u_{\max}$, where $u_{\max}$ is the maximum capacity value occurring in the original instance.

\begin{lemma}\label{lem:two-capacities}
There is an algorithm that given an instance of {\rflow} $I = ((V, E), s, t, u, k)$ computes in time $\mathcal{O}(|E| u_{\max})$ an instance of {\rflow} $I' = ((V', E', s', t', u', k)$ with $u'(e) \in \{1, u_{\max}\}$ for all $e \in E'$ such that the maximum robust flow value of $I$ and $I'$ is identical, where $u_{\max} := \max_{e \in E} u(e)$. Moreover, given an (integral) flow $x'$ in $I'$ one can compute in time polynomial in $|E|$ and $|\supp{x'}|$  an (integral) flow $x$ in $I$ with $\robustval(x) = \robustval(x')$.
\end{lemma}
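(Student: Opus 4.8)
The plan is to replace every arc $e \in E$ with capacity $u(e) \notin \{1, u_{\max}\}$ by a small gadget consisting of parallel ``unit lanes'' that together carry exactly $u(e)$ units of flow, but in such a way that a single arc failure can only destroy one unit of capacity at a time. Concretely, I would subdivide $e = (p,q)$ into $u(e)$ internally disjoint two-arc paths: introduce a fresh intermediate node $m_e^i$ for $i = 1, \dots, u(e)$ and arcs $(p, m_e^i)$, $(m_e^i, q)$, each of capacity $1$. (Arcs that already have capacity $1$ or $u_{\max}$ are left untouched; note $u_{\max}$ itself may fail to be $1$, which is fine since it is an allowed value.) This blows up $|E|$ by a factor of at most $u_{\max}$, so the construction takes time $\mathcal{O}(|E| u_{\max})$ as claimed, and $k$ is unchanged.

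The first key step is to argue that this transformation does not change the maximum robust flow value. For the direction $\robustval(I') \ge \robustval(I)$: given a flow $x$ in $I$, each path $P$ through $e$ can be rerouted through one of the unit lanes of the gadget, and since $x(e) \le u(e)$ we can distribute the paths among the $u(e)$ lanes without violating the unit capacities; the resulting flow $x'$ in $I'$ has the same path-structure from the adversary's point of view, and any $k$-arc failure in $I'$ can be mimicked by a $k$-arc failure in $I$ that destroys at least as much flow (map a failed lane arc $(p,m_e^i)$ or $(m_e^i,q)$ to the original arc $e$, and collapse duplicates), so $\robustval(x') \ge \robustval(x)$. For the converse $\robustval(I') \le \robustval(I)$: given a flow $x'$ in $I'$, contract each gadget back to the single arc $e$; because the gadget as a whole has capacity exactly $u(e)$, the contracted flow $x$ respects the original capacities. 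Here the subtle point is the adversary direction: a failure set $S' \in \mathcal{S}'$ in $I'$ may use up to $k$ of the gadget lane-arcs of a \emph{single} original arc $e$, whereas in $I$ the adversary could only spend one failure to kill all of $e$. But spending one failure on $e$ in $I$ destroys \emph{at least} as much flow as spending several failures inside the gadget of $e$ (it destroys the union), and the ``leftover'' failures in $I$ can then be placed arbitrarily (or harmlessly, e.g.\ on $(s,t)$-irrelevant arcs, or simply left unused since $\mathcal{S}$ requires exactly $k$ but one can pad with any arcs already cut). So for every $S'$ there is an $S \in \mathcal{S}$ destroying at least as much of $x$, giving $\robustval(x) \ge \robustval(x')$. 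Combining, the optima coincide.

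The second key step is the flow-recovery statement: given $x'$ in $I'$, reconstruct $x$ in $I$ with $\robustval(x) = \robustval(x')$. This is just the contraction map above: for each original arc $e$ and each $s$-$t$-path $P'$ in $I'$, replace every maximal subpath of $P'$ that runs $p \to m_e^i \to q$ by the single arc $e$, obtaining a path $P$ in $I$; set $x(P) := \sum_{P'} x'(P')$ over all $P'$ collapsing to $P$. Capacity feasibility and integrality are preserved (sums of integers, and $\sum_{i} x'(\text{lane }i) \le u(e)$), and the running time is polynomial in $|E|$ and $|\supp{x'}|$ since we touch each path in the support once and each has length $O(|E'|) = O(|E| u_{\max})$ — if a strictly polynomial-in-$|E|$ bound is wanted one observes the support decomposition of $x'$ can be taken with at most $|E'|$ paths. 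I expect the main obstacle to be precisely the adversary-side inequality in the $\robustval(I') \le \robustval(I)$ direction: one must be careful that the mapping $S' \mapsto S$ always yields a valid set of exactly $k$ arcs in $I$ whose removal kills at least as much flow as $S'$ does in $I'$, handling the ``wasted'' failures that the $I'$-adversary spends piling onto one gadget. Once that monotone correspondence is nailed down, everything else is bookkeeping.
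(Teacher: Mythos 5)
There is a genuine gap, and it is in the gadget itself. Your ``unit lanes'' construction replaces an arc $e$ of capacity $u(e)$ by $u(e)$ internally disjoint paths of capacity $1$ each, which means the adversary in $I'$ needs $u(e)$ failures to destroy the flow that it could destroy in $I$ with the single failure of $e$. This strictly weakens the adversary and can strictly increase the optimal robust value. Concretely: take two parallel $s$-$t$-arcs $e_1, e_2$ with $u(e_1)=10$, $u(e_2)=3$ and $k=2$. In $I$ the two arcs form an $s$-$t$-cut of cardinality $2$, so every flow has robust value $0$. In your $I'$, $e_1$ is untouched (it has capacity $u_{\max}$) and $e_2$ becomes three unit lanes; sending $10$ units on $e_1$ and $1$ unit per lane gives a flow of value $13$ from which any two failures remove at most $11$, so the optimum of $I'$ is at least $2$. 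Hence the maximum robust flow values of $I$ and $I'$ are \emph{not} identical under your transformation. The paper avoids this by keeping a single ``kill switch'': arc $e$ is replaced by the \emph{concatenation} of one arc of capacity $u_{\max}$ with $u(e)$ parallel unit arcs, so that failing the high-capacity series arc in $I'$ still destroys all flow through the gadget with one failure, and the two adversaries have exactly the same power.

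Relatedly, your argument for the direction $\robustval(x)\ge\robustval(x')$ proves the wrong inequality: you again construct, for each $S'\in\mathcal{S}'$, a set $S\in\mathcal{S}$ destroying at least as much, which only shows $\max_{S'}\le\max_{S}$ and hence $\robustval(x')\ge\robustval(x)$ --- the same direction as before. To get the converse you must start from an arbitrary $S\in\mathcal{S}$ attacking the contracted flow $x$ and exhibit $S'\in\mathcal{S}'$ destroying at least as much of $x'$; with your lanes this is exactly the step that fails (one failure of $e$ versus $u(e)$ failures of lanes), whereas with the paper's series arc it is immediate. The rest of your write-up (the forward direction, the contraction map for flow recovery, integrality, and the running-time bookkeeping) is fine and would go through verbatim once the gadget is fixed.
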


\begin{figure}[t]
  \centering
	\begin{tikzpicture}[font=\scriptsize]
		\path node[labeledNode] (s) {$v$}
			++(2, 0) node[normalNode] (v) {}
				edge[normalEdge, <-] node[above] {$\infty$} (s)
			++(2, 0) node[labeledNode] (w) {$w$}
			  edge[normalEdge, <-, bend left=70] node[above=-2pt] {$1$} (v)
			  edge[normalEdge, <-, bend left=35] node[above=-2pt] {$1$} (v)
			  edge[normalEdge, <-, bend left=0] node[above=-2pt] {$1$} (v)
			  edge[normalEdge, <-, bend left=-35] node[above=-2pt] {$1$} (v)
			  edge[normalEdge, <-, bend left=-70] node[above=-2pt] {$1$} (v);
	\end{tikzpicture}%
	\vspace{-0.4cm}
	\caption{Construction for Lemma~\ref{lem:two-capacities}. Arc $(v, w)$ with capacity $5$ is replaced by a sequence of an arc with capacity $\infty$ and $5$ arcs with capacity $1$.\label{fig:two-capacities}}
\end{figure}
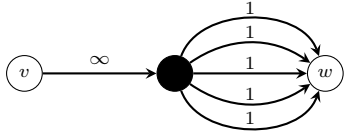

\begin{proof}
To obtain an equivalent instance in which only the capacities $1$ and $u_{\max}$ occur, observe that we can replace any arc of capacity $u$ by a the concatenation of an arc of capacity $u_{\max}$ and  $u$ parallel arcs of capacity $1$. Failure of the original arc corresponds to failure of the infinite capacity arc in the modified instance. See Figure~\ref{fig:two-capacities} for an illustration.
\end{proof}

In particular, this implies that our hardness result still holds in the more restricted setting of capacities $1$ and $\infty$ (where $\infty$ can also be replaced by a number that is polynomially bounded in the network size).

\begin{corollary}
  {\rflow} is NP-hard, even when restricted to instances where $u(e) \in \{1, \infty\}$ for all $e \in E$ and where the number of paths is polynomial in the size of the graph.
\end{corollary}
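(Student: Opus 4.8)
The plan is to derive the corollary directly from Theorem~\ref{thm:hardness-arbitrary-k} by applying the transformation of Lemma~\ref{lem:two-capacities} and then observing that the resulting instance needs only two capacity values, one of which can be taken to be $1$ and the other a polynomially bounded stand-in for $\infty$. Concretely, I would start from the {\rflow} instance $I$ produced in the proof of Theorem~\ref{thm:hardness-arbitrary-k}: after replacing $\infty$ by $|E|M$ and scaling all capacities by $\ell$ (as remarked at the end of that proof), $I$ has integral capacities that are polynomial in the size of the graph, the number of paths is at most $|E|$, and deciding whether the optimal robust flow value is ``large'' (equivalently, whether $x^*(v',v'')>0$) solves \textsc{Clique}. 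Since the maximum capacity $u_{\max}$ in $I$ is polynomial in the network size, Lemma~\ref{lem:two-capacities} runs in time $\mathcal{O}(|E|\,u_{\max})$, which is polynomial, and yields an instance $I'$ with $u'(e)\in\{1,u_{\max}\}$ and the same maximum robust flow value as $I$.

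The second point to check is that the path-count bound is preserved. In the gadget of Lemma~\ref{lem:two-capacities}, each arc of capacity $u$ in $I$ is blown up into a capacity-$u_{\max}$ arc followed by $u$ parallel unit-capacity arcs; any $s$-$t$-path in $I'$ corresponds to an $s$-$t$-path in $I$ together with, for each arc traversed, a choice of one of the parallel copies. This could in principle inflate the number of paths multiplicatively. I would handle this by noting that we do not actually need the number of paths in $I'$ itself to be polynomial in the corollary's statement as applied to the \emph{decision} problem — but to be safe and to match the statement, I would instead observe that one may subdivide differently: replace each arc $e$ of capacity $u(e)$ by a long path of $u(e)$ unit-capacity arcs in series, each preceded by a parallel ``bypass'' of capacity $u_{\max}$ — no, the cleanest route is simply to invoke that in $I$ the number of paths is $|\mathcal P|\le|E|$ and each path of $I$ lifts to at most $\prod_{e\in P} (\text{number of parallel copies of }e) $ paths, but since we may route all the unit arcs of a given original arc \emph{in series} rather than in parallel when only a single path uses that arc, and more generally bound the total flow decomposition size, the relevant quantity is the support of an optimal flow, which Lemma~\ref{lem:two-capacities} guarantees can be pulled back. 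The honest statement I would actually write is: $I'$ still has a polynomial number of $s$-$t$-paths because the construction replaces each arc by a series-parallel gadget on $u(e)+1=\mathrm{poly}$ arcs, so $|E'|$ is polynomial, and hence — using again that the reduced graph is essentially a union of two-arc $s$-$t$-paths plus the small subgraph $H$ and the parallel arcs $e_1,\dots,e_k$, all of which have this structure — the path count remains polynomial.

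Finally, I would dispose of the ``$u_{\max}$ versus $\infty$'' wording: the corollary is stated with $u(e)\in\{1,\infty\}$, and the parenthetical remark already concedes that $\infty$ may be replaced by any polynomially bounded number; so I would simply point out that in the instance $I'$ the value $u_{\max}$ is polynomial in the network size and plays exactly the role of $\infty$ (it is large enough that no capacity-$u_{\max}$ arc is ever a bottleneck in the arguments of Theorem~\ref{thm:hardness-arbitrary-k}, which only ever used that these arcs are ``large''), so relabeling it as $\infty$ — or keeping it as a polynomial number — changes nothing. Combining these observations with the fact that \textsc{Clique} is NP-hard gives the claimed hardness.

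I expect the main obstacle to be the bookkeeping in the second paragraph: verifying that the arc-splitting of Lemma~\ref{lem:two-capacities}, applied to the specific instance from Theorem~\ref{thm:hardness-arbitrary-k}, does not blow up the number of $s$-$t$-paths super-polynomially. This requires either arguing directly about the structure of that instance (nearly all of whose paths have length two) or, more robustly, observing that one may replace each original arc by a \emph{series} chain of unit-capacity arcs with a single parallel capacity-$u_{\max}$ shortcut, so that each original path lifts to only a constant number of new paths. Everything else — correctness of the robust-value preservation, the pull-back of flows, polynomiality of sizes and capacities — follows immediately from the lemmas already proved.
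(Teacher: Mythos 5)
Your proposal is correct and takes essentially the same route as the paper, which derives the corollary by applying Lemma~\ref{lem:two-capacities} to the instance of Theorem~\ref{thm:hardness-arbitrary-k}; your extra check that the path count stays polynomial is legitimate and is settled by your observation that all $s$-$t$-paths in that instance have constant length, so the parallel-arc gadget inflates the number of paths by at most a factor $u_{\max}^{O(1)}$, which is polynomial in the (already enlarged) graph. One small caveat: the alternative gadget you float --- a series chain of unit-capacity arcs with a parallel $u_{\max}$ shortcut --- would not preserve capacities (a series chain of unit arcs has capacity $1$, not $u(e)$), but your main argument does not rely on it.
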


\paragraph{Unit capacities}
On the other hand, it is not hard to see that the problem becomes easy in the unit capacity case.

\begin{theorem}
If $u \equiv 1$ then any maximum flow also is a maximum robust flow.
\end{theorem}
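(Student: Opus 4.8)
The plan is to determine the optimal robust value exactly. Let $x$ be a maximum $s$-$t$-flow and write $F := \flowval(x)$; since all capacities equal $1$, $F$ is well defined (the path-based and arc-based maximum flow values coincide) and equals the cardinality of a minimum $s$-$t$-cut by the max-flow-min-cut theorem. I claim that $\robustval(x) = \max\{0,\, F - k\}$ while $\robustval(x') \le \max\{0,\, F - k\}$ for every $s$-$t$-flow $x'$; together these say that $x$ is a maximum robust flow.

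The lower bound $\robustval(x) \ge \max\{0,\, F - k\}$ rests on a union bound valid for an \emph{arbitrary} flow $x'$: the set of paths meeting a given $S \in \mathcal{S}$ is $\bigcup_{e \in S} \{P \in \mathcal{P} : e \in P\}$, so
$$\elsum{P \in \mathcal{P} : P \cap S \neq \emptyset} x'(P) \;\le\; \sum_{e \in S}\ \elsum{P \in \mathcal{P} : e \in P} x'(P) \;=\; \sum_{e \in S} x'(e) \;\le\; |S| \;=\; k,$$
using $x'(e) \le u(e) = 1$ for every arc $e$. Hence $\max_{S \in \mathcal{S}} \sum_{P : P \cap S \neq \emptyset} x'(P) \le k$, i.e.\ $\robustval(x') \ge \flowval(x') - k$; applying this to the maximum flow $x$ and combining with $\robustval(x) \ge 0$ gives $\robustval(x) \ge \max\{0,\, F - k\}$.

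For the matching upper bound, fix a minimum $s$-$t$-cut $C$ with $|C| = F$. If $F \le k$, choose $S \in \mathcal{S}$ with $C \subseteq S$; every $s$-$t$-path uses an arc of $C \subseteq S$, so all of $x'$ is destroyed and $\robustval(x') = 0$. If $F > k$, choose $S \in \mathcal{S}$ with $S \subseteq C$; any path $P$ with $P \cap S = \emptyset$ still crosses $C$ and therefore uses an arc of $C \setminus S$, so by the same union bound $\sum_{P : P \cap S = \emptyset} x'(P) \le \sum_{e \in C \setminus S} x'(e) \le |C \setminus S| = F - k$, whence $\robustval(x') \le F - k$. In either case $\robustval(x') \le \max\{0,\, F - k\}$, which finishes the argument. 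I do not expect a real obstacle; the only points needing mild care are the case distinction on whether $k$ reaches the minimum cut size and the observation that, after deleting $S \subseteq C$, surviving paths must re-use the residual cut $C \setminus S$ — which is precisely what makes the union bound tight.
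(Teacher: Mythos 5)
Your proof is correct and follows essentially the same route as the paper's: both fix a minimum cut $C$, split into the cases $|C|\le k$ and $|C|>k$, bound the surviving flow of an arbitrary flow by $|C|-k$ via the residual cut $C\setminus S$, and use unit capacities to show a maximum flow loses at most $k$ units. Your write-up merely makes the union bound and the case analysis more explicit.
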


\begin{proof}
  Let $C$ be a minimum $s$-$t$-cut in $G$. 
  If $|C| \leq k$, then every flow has robust value $0$. Thus assume $|C| > k$.
  Let $x$ be any $s$-$t$-flow. Clearly, $\robustval(x) \leq |C| - k$, as after removal of any $k$ arcs from $C$, the remaining flow must traverse the $|C| - k$ remaining arcs in the cut. Now assume $x$ is a maximum flow, i.e., $\flowval(x) = |C|$. Since every arc carries at most $1$ unit of flow, the removal of any $k$ arcs from $G$ can only decrease the flow value by $k$, thus $x$ is an optimal solution to {\rflow}.
\end{proof}

\section{Integral robust flows}\label{sec:integral}

In this section, we show that finding an maximum \emph{integral} robust flow is $\NP$-hard already for instances with $k = 2$. This is in contrast to the case $k = 1$, for which it is possible to compute the best integral solution.
In fact, our reduction implies that it is hard to distinguish instances with optimal value $2$ or~$3$, resulting in hardness of approximation for the integral problem.
Interestingly, the fractional version of the problem admits a $4/3$-approximation algorithm for~$k = 2$~\cite{bertsimas2013power}, indicating that the integral problem is indeed harder.

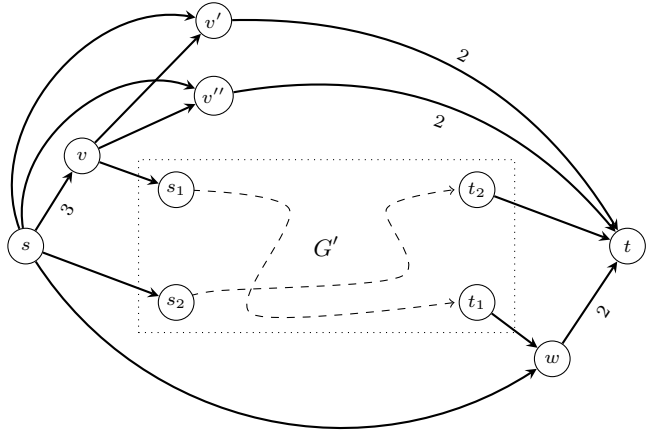
\begin{figure}[t]
  \centering
	\begin{tikzpicture}[font=\scriptsize]
		\node[labeledNode] (s) {$s$};
		
		\draw[dotted] (s) ++(1.5, 1.15) rectangle ++(5, -2.3);
		
		\path (s) ++(4, 0) node {\small $G'$};
				
		\draw[dashed, ->] plot [smooth] coordinates {(2, 0.75) (3.5, 0.5) (3, -0.9) (5.7, -0.75)};
				
		\draw[dashed, ->] plot [smooth] coordinates {(2, -0.75) (2.5, -0.6) (5, -0.4) (4.8, 0.5) (5.7, 0.75)};
		
		\path (s)
			++(0.75, 1.2) node[labeledNode] (v) {$v$}
				edge[normalEdge, <-] node[below, sloped] {$3$} (s)
			++(1.75, 1.8) node[labeledNode] (v1) {$v'$}
				edge[normalEdge, bend right=60, <-] (s)
				edge[normalEdge, <-] (v)
			++(0, -1) node[labeledNode] (v2) {$v''$}
				edge[normalEdge, bend right=60, <-] (s)
				edge[normalEdge, <-] (v);
				
		\path (s)
			++(2, 0.75) node[labeledNode] (s1) {$s_1$}
				edge[normalEdge, <-] (v)
		  +(4, 0) node[labeledNode] (t2) {$t_2$}
			++(0, -1.5) node[labeledNode] (s2) {$s_2$}
				edge[normalEdge, <-] (s)
			+(4, 0) node[labeledNode] (t1) {$t_1$};
			
		\path (s)
		  ++(8, 0) node[labeledNode] (t) {$t$}
				edge[normalEdge, <-] (t2)
				edge[normalEdge, bend right=30, <-] node[above, sloped] {$2$} (v1)
				edge[normalEdge, bend right=30, <-] node[below, sloped] {$2$} (v2)
			++(-1, -1.5) node[labeledNode] (w) {$w$}
				edge[normalEdge, bend left=45, <-] (s)
				edge[normalEdge, <-] (t1)
				edge[normalEdge, ->] node[below, sloped] {$2$}  (t);
			    
	\end{tikzpicture}
	\vspace{-0.7cm}
	\caption{The construction for showing $\NP$-hardness of {\irflow} with $k=2$. 
	The dotted box contains an instance of \textsc{Arc-disjoint Paths}.
	Labels at the arcs show the capacities. All unlabeled arcs have unit capacity.}
	\label{fig:integral-hardness}
\end{figure}

\begin{theorem}
  Unless $P = \NP$, there is no $(3/2 - \varepsilon)$-approximation algorithm for {\irflow}, even when restricted to instances where $k=2$ and $u(e) \leq 3$ for all $e \in E$.
\end{theorem}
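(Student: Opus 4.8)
The plan is a reduction from \textsc{Arc-disjoint Paths} with two terminal pairs in a directed graph (which is NP-complete): given $(G',s_1,t_1,s_2,t_2)$ with unit arc capacities, decide whether $G'$ contains arc-disjoint paths $s_1\rightsquigarrow t_1$ and $s_2\rightsquigarrow t_2$. I would build the instance in \cref{fig:integral-hardness}, gluing $G'$ into the gadget via the four arcs $(v,s_1),(s,s_2),(t_1,w),(t_2,t)$ and adding all remaining arcs with the indicated capacities in $\{1,2,3\}$, setting $k=2$. The goal is to show that the optimal integral robust value equals $3$ if the two arc-disjoint paths exist and is at most $2$ otherwise. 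Since the robust value of an integral flow is always an integer, a hypothetical $(3/2-\varepsilon)$-approximation would, on a YES-instance ($\mathrm{OPT}=3$), return a flow of robust value $\ge 3/(3/2-\varepsilon)>2$ and hence $\ge 3$, and on a NO-instance ($\mathrm{OPT}\le 2$) return $\le 2$; this would decide \textsc{Arc-disjoint Paths} in polynomial time.

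I would start with the cheap upper bound. The arcs entering $t$ are $(v',t),(v'',t),(w,t)$ of capacity $2$ and $(t_2,t)$ of capacity $1$, and every $s$-$t$-path uses exactly one of them; deleting two of the capacity-$2$ arcs therefore leaves at most $2+1=3$ surviving flow, so $\robustval(x)\le 3$ for every flow $x$. Running this argument for each of the three pairs of capacity-$2$ arcs shows that $\robustval(x)=3$ forces $x(v',t)=x(v'',t)=x(w,t)=2$ and $x(t_2,t)=1$, hence $\flowval(x)=7$ and the worst-case damage is exactly $4$. Propagating these equalities through flow conservation and the unit capacities of $(s,v'),(v,v'),(s,v''),(v,v''),(s,w),(t_1,w)$ pins down $x$ on every arc outside $G'$: first $x(s,v')=x(v,v')=x(s,v'')=x(v,v'')=x(s,w)=x(t_1,w)=1$, then (from the cut out of $s$) $x(s,v)=3$ and $x(s,s_2)=1$, and then (from conservation at $v$) $x(v,s_1)=1$. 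Consequently exactly one unit of flow enters $G'$ at $s_1$, one at $s_2$, one leaves at $t_1$, one leaves at $t_2$, and since $G'$ has unit capacities this internal flow decomposes into two arc-disjoint paths in $G'$, which are either $\{s_1\rightsquigarrow t_1,\ s_2\rightsquigarrow t_2\}$ (the ``uncrossed'' pair) or $\{s_1\rightsquigarrow t_2,\ s_2\rightsquigarrow t_1\}$ (the ``crossed'' pair); note that every cycle in a flow decomposition of $x$ lies inside $G'$, as $G$ without the internal arcs of $G'$ is acyclic, so cycles can be ignored.

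For the YES-direction I would route, using the given arc-disjoint paths, one unit along $s\to v\to s_1\rightsquigarrow t_1\to w\to t$ and one unit along $s\to s_2\rightsquigarrow t_2\to t$, together with one unit along each of $s\to v\to v'\to t$, $s\to v\to v''\to t$, $s\to v'\to t$, $s\to v''\to t$, $s\to w\to t$. This is feasible, has nominal value $7$, and a finite check over two-arc attacks — the only candidates worth deleting are $(s,v)$ and the three capacity-$2$ arcs into $t$ — shows the worst-case damage is $4$, so $\robustval=3$. For the NO-direction, the structural analysis above leaves only the crossed configuration (the uncrossed one would certify a YES-instance), so $x$ sends a unit along $s\to v\to s_1\rightsquigarrow t_2\to t$ and a unit along $s\to s_2\rightsquigarrow t_1\to w\to t$. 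Now the adversary deletes $(s,v)$ and $(w,t)$: deleting $(s,v)$ destroys the three units through $v$ (the two $v'$/$v''$ paths and the crossed path $s\to v\to s_1\rightsquigarrow t_2\to t$), and deleting $(w,t)$ destroys the two remaining units through $w$ ($s\to w\to t$ and $s\to s_2\rightsquigarrow t_1\to w\to t$) — a total of $5$, so $\robustval(x)\le 7-5=2$, contradicting $\robustval(x)=3$. Hence NO-instances have optimum at most $2$, and, since all capacities lie in $\{1,2,3\}$ and $k=2$, the reduction establishes the claimed inapproximability.

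The step I expect to be the main obstacle is the NO-direction structural argument: one has to verify that achieving robust value $3$ really forces the flow to saturate the whole gadget in the unique pattern above — leaving only the routing inside $G'$ as a degree of freedom — and then that the crossed routing is self-defeating because it simultaneously exposes three units to the single arc $(s,v)$ and two further units to $(w,t)$. Choosing the capacities and the serial/parallel structure so that all three requirements hold at once — $\robustval\le 3$ is forced, the uncrossed routing attains $3$, and the crossed routing is punished by a $5$-unit two-arc attack — is the delicate design task; everything else is flow-conservation bookkeeping and a small finite case analysis over pairs of deleted arcs.
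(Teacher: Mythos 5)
Your reduction and gadget are exactly the paper's (same graph, same capacities, $k=2$, and the same seven-path flow for the YES-direction), so this is essentially the same approach and it is correct. The one place you diverge is the soundness direction: the paper extracts the two disjoint paths directly from two attack sets --- $\{(s,v),(w,t)\}$, which forces a flow-carrying path from $s_2$ to $t_2$, and $\{(v',t),(v'',t)\}$, which forces a flow path through $(v,s_1)$ that must then exit via $(t_1,w)$ --- whereas you first pin down the entire flow outside $G'$ by saturating the cut into $t$ and then rule out the ``crossed'' routing with the $5$-unit attack $\{(s,v),(w,t)\}$. Both arguments work; yours is a bit longer but has the side benefit of making the global bound $\robustval(x)\leq 3$ explicit, which the paper leaves implicit.
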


\begin{proof}
  We reduce from \textsc{Arc-disjoint Paths}, which is well-known to be $\NP$-hard~\cite{FHW80}. 
  As input of \textsc{Arc-disjoint Paths}, we are given a directed graph $G' = (V', E')$ and two pairs of nodes $(s_1, t_1)$ and $(s_2, t_2)$. The task is to decide whether there is an $s_1$-$t_1$-path $P'_1$ and and $s_2$-$t_2$-path $P'_2$ in $G'$ with $P'_1 \cap P'_2 = \emptyset$.

  From the input graph $G' = (V', E')$, we construct an instance of {\rflow} by adding $6$ new nodes and $13$ new arcs, obtaining a new directed graph $G = (V, E)$ with
  \begin{align*}
    V & \ =\  V' \cup \{s,\, t,\, v,\, v',\, v'',\, w\}\\
    E & \ = \ E' \cup \{(s, v),\, (s, v'),\, (s, v''),\, (v, s_1),\, (v, v'),\, (v, v''),\,\\
    & \qquad\qquad\qquad (v', t),\, (v'', t),\, (s, w),\, (t_1, w),\, (w, t),\, (s, s_2),\, (t_2, t)\}.
  \end{align*}
  We set $u(s, v) = 3$ and $u(v', t) = u(v'', t) = u(w, t) = 2$. All other arcs have capacity $1$. The whole construction is depicted in \cref{fig:integral-hardness}.
  
  We show that there is an integral flow $x$ with $\robustval(x) \geq 3$ if an only if there is an $s_1$-$t_1$-path $P'_1$ and and $s_2$-$t_2$-path $P'_2$ in $G'$ with $P'_1 \cap P'_2 = \emptyset$. It is thus $\NP$-hard to distinguish instances of $\irflow$ with optimal value at least $3$ from those with optimal value at most $2$.
  
  First assume there is an integral flow $x$ with $\robustval(x) = 3$.
  Consider the arc set $S' = \{(s, v),\, (w, t)\}$.
  As $\sum_{P : P \cap S' = \emptyset} x(P) \geq 3$, there must be three $s$-$t$-paths  carrying $1$ unit of flow each and not intersecting with $S'$.
  These paths must thus start with the arcs $(s, v')$, $(s, v'')$, and $(s, s_2)$, respectively. 
  In particular, the latter path must end with $(t_2, t)$, as $t_2$ cannot be reached from $v'$ or $v''$. Let $P_2$ be this unique flow-carrying path starting with $(s, s_2)$ and ending with $(t_2, t)$. Note that all arcs of $P_2$ have unit capacity and thus no other flow-carrying path can intersect $P_2$.
  Now consider the arc set $S'' = \{(v', t),\, (v'', t)\}$. 
  Because the arc $(v, s_1)$ is part of an $s$-$t$-cut with capacity $3$ in the network $(V, E \setminus S'')$, there must be a flow path $P_1$ containing $(v, s_1)$. As $(t_2, t)$ is already saturated by the flow on $P_2$, the path $P_1$ must use $(t_1, w)$. In particular, $P_1$ contains an $s_1$-$t_1$-path $P'_1$ and $P_2$ contains an $s_2$-$t_2$-path $P'_2$, and $P_1 \cap P_2 = \emptyset$.
  
  Conversely, assume there is an $s_1$-$t_1$-path $P'_1$ and and $s_2$-$t_2$-path $P'_2$ in $G'$ with $P'_1 \cap P'_2 = \emptyset$. Let $P_1 = \{(s, v),\, (v, s_1)\} \cup P'_1 \cup \{(t_1, w),\, (w, t)\}$ and let $P_2 = \{(s, s_1)\} \cup P'_2 \cup \{(t_2, t)\}$. Send $1$ unit of flow along each of the paths $P_1, P_2$ and the five remaining paths $s$-$v'$-$t$, $s$-$v''$-$t$, $s$-$v$-$v'$-$t$, $s$-$v$-$v''$-$t$, and $s$-$w$-$t$, obtaining a flow $x$. Now assume by contradiction that $\robustval(x) < 3$. Because $\flowval(x) = 7$, there must be $S \in \mathcal{S}$ with $\sum_{P : P \cap S \neq \emptyset} x(P) > 4$. In particular, the arc $(s, v)$ must be contained in $S$, as it is the only arc carrying more than $2$ units of flow. The other arc in $S$ must be one of the arcs with capacity $2$, i.e., $(v, v')$, $(v, v'')$, or $(w, t)$. However, each of these three arcs is contained in one of the three flow paths using $(s, v)$. Thus $\sum_{P : P \cap S \neq \emptyset} x(P) = 4$, a contradiction.
\end{proof}

For the reduction given above to work, it is sufficient to have arcs of capacity at most~$3$. We now argue that the problem can be solved efficiently for arbitrary values of~$k$ when capacities are bounded by $2$.

\begin{theorem}
{\irflow} restricted to instances with $u(e) \leq 2$ for all $e \in E$ can be solved in polynomial time.
\end{theorem}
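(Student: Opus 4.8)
The plan is to reduce the problem, via a combinatorial analysis of the adversary's best response, to polynomially many max-flow computations. Throughout I assume $k < |E|$, since otherwise the adversary can delete an entire $s$-$t$-cut and every flow has robust value $0$. \emph{Step 1 (normalizing the flow).} First I would observe that replacing a path $P$ with $x(P)=2$ by two parallel copies carrying one unit each leaves $\sum_{P : P\cap S\neq\emptyset}x(P)$ unchanged for every $S\in\mathcal S$, so there is an optimal integral solution with $x(P)\in\{0,1\}$ for all $P$. Hence a solution is simply a collection $\mathcal Q$ of $s$-$t$-paths in which every arc $e$ is used by at most $u(e)\le 2$ of the paths, and $\robustval(\mathcal Q)=|\mathcal Q|-\max_{S\in\mathcal S}\bigl|\{Q\in\mathcal Q : Q\cap S\neq\emptyset\}\bigr|$.

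\emph{Step 2 (the adversary's best response).} For a fixed $\mathcal Q$, the inner maximum is a \textsc{Maximum $k$-Coverage} instance whose ground set is $\mathcal Q$ and whose sets $R_e:=\{Q\in\mathcal Q : e\in Q\}$ all satisfy $|R_e|\le 2$. Let $\Gamma_{\mathcal Q}$ be the graph on vertex set $\mathcal Q$ with an edge $\{Q,Q'\}$ whenever some (necessarily capacity-$2$) arc $e$ has $R_e=\{Q,Q'\}$, and let $\nu(\mathcal Q)$ denote its matching number. I would then prove
$$\max_{S\in\mathcal S}\bigl|\{Q\in\mathcal Q : Q\cap S\neq\emptyset\}\bigr| \;=\; \min\bigl(|\mathcal Q|,\; k+\min(k,\nu(\mathcal Q))\bigr).$$
For "$\ge$", pick a matching of $\Gamma_{\mathcal Q}$ of size $\min(k,\nu(\mathcal Q))$, realized by double-used arcs (covering twice as many paths as the budget it consumes), and spend the remaining budget on arbitrary arcs lying on still-uncovered paths, each of which covers at least one new path. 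For "$\le$", note that the arcs which, in a greedy ordering, cover two previously-uncovered paths form a matching of $\Gamma_{\mathcal Q}$, and a short case count (according to whether $\nu(\mathcal Q)\ge k$, or $\nu(\mathcal Q)<k$ with coverage below $|\mathcal Q|$, or coverage equal to $|\mathcal Q|$) yields the bound. Consequently $\robustval(\mathcal Q)=\bigl[\,|\mathcal Q|-k-\min(k,\nu(\mathcal Q))\,\bigr]_+$, where $[\alpha]_+:=\max(\alpha,0)$.

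\emph{Step 3 (reduction to conflict-bounded max flow).} For $s\in\{0,\dots,k-1\}$ let $M_s:=\max\{|\mathcal Q| : \mathcal Q\text{ feasible},\ \nu(\mathcal Q)\le s\}$, and let $M_k:=\Phi$ be the ordinary max-flow value of $(G,u)$. Using Step~2 one checks that the optimal robust value equals $\max_{0\le s\le k}\bigl[M_s-k-\min(k,s)\bigr]_+$: each $s$ yields a lower bound realized by a $\mathcal Q$ witnessing $M_s$ (for $s=k$, by any max flow, since $\robustval$ is antitone in $\nu$), and conversely an optimal $\mathcal Q^*$ is accounted for by $s=\min(k,\nu(\mathcal Q^*))$. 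Since $k<|E|$, the whole problem thus reduces to computing $M_0,\dots,M_{k-1}$ in polynomial time; here $M_0$ is exactly the maximum number of pairwise arc-disjoint $s$-$t$-paths, i.e.\ the max flow for unit capacities.

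\emph{Step 4 (the hard part: computing $M_s$).} What remains — and where essentially all the work sits — is a polynomial-time algorithm for $M_s$, the maximum flow value admitting a path decomposition whose conflict graph has matching number at most $s$. I would not expect a closed form: a vertex cover of $\Gamma_{\mathcal Q}$ of size at most $2\nu(\mathcal Q)$ deletes $\mathcal Q$ down to a set of arc-disjoint paths, so only the loose bounds $M_0\le M_s\le\min(\Phi,\,M_0+2s)$ hold, and small examples realize both extremes. My approach would be to express $M_s$ through an extended, polynomial-size formulation — for instance an auxiliary network in which spending one unit of a budget of size $s$ permits a capacity-$2$ arc to be double-used by a fresh pair of paths — and to prove that this formulation is integral; alternatively, to argue directly that flow augmentations can be organized so that the matching number of the conflict graph grows by at most one per unit of additional flow. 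Controlling this growth under rerouting is the main obstacle; once $M_s$ is computable for every $s$, Steps~1--3 assemble the claimed polynomial-time algorithm.
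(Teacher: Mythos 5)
Your Steps 1--3 are a sound (if somewhat elaborate) reformulation: normalizing to $0/1$ path values, characterizing the adversary's best response as $\min\bigl(|\mathcal Q|,\,k+\min(k,\nu(\mathcal Q))\bigr)$ via a matching in the conflict graph, and reducing the problem to the quantities $M_s$ are all defensible. But the proof has a genuine gap exactly where you acknowledge it: Step 4. You never give an algorithm for $M_s$ ($0<s<k$); you only name two speculative directions (an integral extended formulation, or an augmentation argument controlling the growth of $\nu$) without establishing either. Computing a maximum flow subject to a combinatorial constraint on its path decomposition is not a standard primitive, and nothing in your argument shows it is tractable. As written, the reduction replaces the original problem by a family of subproblems whose complexity is unresolved, so the claimed polynomial-time algorithm is not established.

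The paper avoids this entirely by showing that the intermediate values of $s$ are irrelevant: the optimum always equals $\max\{0,\ \flowval(x_1)-k,\ \flowval(x_2)-2k\}$, where $x_1$ is a maximum flow under unit capacities (your $M_0$) and $x_2$ is a maximum flow under $u$ (your $\Phi=M_k$). The key idea you are missing is to restrict the adversary to a \emph{minimum-cardinality} $s$-$t$-cut $C$ and greedily pick $k$ arcs of $C$ removing the most flow from an optimal $x^*$. If after the greedy phase some remaining cut arc still carries $2$ units of $x^*$-flow, then every greedy step removed $2$ units, so the adversary destroys $2k$ and $\robustval(x^*)\le\flowval(x_2)-2k\le\robustval(x_2)$; if every remaining cut arc carries at most $1$ unit, the surviving flow is at most $|C|-k=\flowval(x_1)-k\le\robustval(x_1)$; and if nothing survives the robust value is $0$. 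This collapses your whole hierarchy $M_0,\dots,M_k$ to the two endpoints, each computable by a single max-flow computation. If you want to salvage your framework, you would need to prove precisely this collapse, i.e.\ that $\max_{0\le s\le k}\bigl[M_s-k-\min(k,s)\bigr]_+=\max\bigl\{0,\ [M_0-k]_+,\ [\Phi-2k]_+\bigr\}$, which is what the cut-greedy argument delivers.
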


\begin{proof}
Let $x^*$ be an optimal solution to {\irflow}.
Let $x_1$ be a maximum flow in $G$ with respect to unit capacities and let $x_2$ be a maximum flow in $G$ with respect to capacities $u$. As capacities are integral, we can assume without loss of generality that $x_1$ and $x_2$ are integral. We will show that 
$\robustval(x^*) = \max \{0,\; \flowval(x_1) - k,\; \flowval(x_2) - 2k\}$.

To prove this claim, consider a minimum cardinality $s$-$t$-cut $C$ in $G$. 
We greedily construct a set $S \subseteq C$ with $|S| \leq k$ as follows: 
Starting with $S = \emptyset$, iteratively add an arc $e \in C \setminus S$ to $S$ that maximizes
$$\textstyle\Delta(S, e) := \sum_{P \in \mathcal{P} \st e \in P,\; P \cap S = \emptyset} x^*(P)$$
until $|S| = k$ or $S = C$.
In other words, we greedily add an arc from $C$ to $S$ that removes the most flow from $x^*$.
Note that throughout this selection process, the value $\Delta(S, e) \in \{0, 1, 2\}$ is non-increasing for each $e \in C$.
After construction of~$S$, let $\Delta := 0$ if $S = C$ and $\Delta := \max_{e \in C \setminus S} \Delta(S, e)$ otherwise.
\begin{enumerate}
\item If $\Delta = 0$, then $\sum_{P \in \mathcal{P} : P \cap S \neq \emptyset} x^*(P) = \flowval(x^*)$ and therefore $\robustval(x^*) = 0$. In this case, any integral flow is an optimal solution.
\item If $\Delta = 1$, then $\sum_{P \in \mathcal{P} \in \mathcal{P}: e \in P, P \cap S = \emptyset} x^*(P) \leq 1$ for every arc $e \in C \setminus S$. Therefore $\robustval(x^*) \leq \sum_{P : P \cap S = \emptyset} x^*(P) \leq |C \setminus S| = \flowval(x_1) - k \leq \robustval(x_1)$, where the equality follows from $\flowval(x_1) = |C|$ and the last inequality follows from the fact that every arc carries at most $1$ unit of flow in $x_1$. We conclude that $x_1$ is an optimal solution in this case.
\item If $\Delta = 2$, then $\sum_{P \in \mathcal{P} : P \cap S \neq \emptyset} x^*(P) = 2k$, because in every iteration an arc~$e$ with $\Delta(S, e) = 2$ was added to $S$. 
Thus $\robustval(x^*) \leq \flowval(x^*) - 2k \leq \flowval(x_2) - 2k \leq \robustval(x_2)$, where the last inequality follows from the fact that every arc carries at most $2$ units of flow in $x_2$. We conclude that $x_2$ is an optimal solution in this case.
\end{enumerate}
Computing~$x_1$ and~$x_2$ and identifying whether $\flowval(x_1) - k \geq \flowval(x_2) - 2k$ can be done in polynomial time.
\end{proof}

\section{Conclusion}
In this article, we point out that the computational complexity of the {\rflow} problem, which was believed to be settled 10 years ago, is indeed open. We show that the problem is $\NP$-hard when the number of failing arcs is part of the input. However, it remains a challenging open research question to determine the complexity when this number is bounded by a constant. Our hardness result also does not give bounds on the approximability of the problem (other than ruling out fully polynomial-time approximation schemes) and it would be interesting to see whether the $O(k)$-approximation by Bertsimas, Nasrabadi, and Orlin~\cite{bertsimas2013power} can be improved to a constant factor.

\paragraph{Acknowledgements}
We thank Tom McCormick and Gianpaolo Oriolo for numerous helpful discussions that led to the discovery of the flaw in \cite{DuChandrasekaran2007}. This work was supported by the `Excellence Initiative' of the German Federal and State Governments, the Graduate School~CE at TU~Darmstadt, and by the Alexander von Humboldt Foundation with funds of the German Federal Ministry of Education and Research (BMBF).

\printbibliography
%\bibliographystyle{plain}
%\bibliography{references}

\end{document}